%% file: main.tex
\documentclass[reqno,11pt]{article}

\usepackage{graphicx}
\usepackage{fullpage}
\usepackage{amsthm,amsmath,amssymb}
\usepackage[T1]{fontenc}
\usepackage[utf8]{inputenc}
\usepackage{xurl}
\usepackage[numbers]{natbib}
\usepackage[colorlinks,breaklinks=true]{hyperref}
\usepackage[dvipsnames]{xcolor}
\usepackage{enumitem}
\usepackage{mathtools}
\usepackage[nameinlink,noabbrev,capitalise]{cleveref}
\usepackage{newpxtext}
\usepackage{algorithm}
\usepackage{algpseudocode}

\hypersetup{
  colorlinks=true,
  linkcolor=Maroon,
  citecolor=NavyBlue,
  urlcolor=NavyBlue
}
\allowdisplaybreaks

\newtheorem{thm}{Theorem}
\newtheorem{lem}{Lemma}
\newtheorem{prop}{Proposition}
\newtheorem{coro}{Corollary}

\theoremstyle{definition}

\newtheorem{example}{Example}

\theoremstyle{remark}
\newtheorem{remark}{Remark}

\crefname{thm}{Theorem}{Theorems}
\crefname{lem}{Lemma}{Lemmas}
\crefname{prop}{Proposition}{Propositions}
\crefname{coro}{Corollary}{Corollaries}
\crefname{defn}{Definition}{Definitions}
\crefname{example}{Example}{Examples}
\crefname{assump}{Assumption}{Assumptions}
\crefname{remark}{Remark}{Remarks}

\newcommand{\R}{\mathbb{R}}
\newcommand{\E}{\mathbb{E}}

\newcommand{\bigO}{O}
\newcommand{\smallo}{o}
\newcommand{\bigOp}{O_p}

\newcommand{\marginP}{P_\mathrm{margin}}
\newcommand{\marginp}{p_\mathrm{margin}}
\newcommand{\LVR}{\mathrm{LVR}}
\newcommand{\TE}{\mathrm{TE}}

\algnewcommand\Assert{\textbf{assert }}

\title{Partially Active Automated Market Makers}
\author{
  Ko Sunghun\footnote{
    \texttt{kosunghun317@matroos.xyz}, Matroos Labs and Department of Mathematical Sciences, KAIST.
  }
}
\date{\today}

\begin{document}

\maketitle

\begin{abstract}
  We introduce a new class of automated market maker (AMM), the \emph{partially active automated market maker} (PA-AMM). PA-AMM divides its reserves into two parts, the active and the passive parts, and uses only the active part for trading. At the top of every block, such a division is done again to keep the active reserves always being \(\lambda\)-portion of total reserves, where \(\lambda \in (0, 1]\) is an activeness parameter. We show that this simple mechanism reduces adverse selection costs, measured by loss-versus-rebalancing (LVR), and thereby improves the wealth of liquidity providers (LPs) relative to plain constant-function market makers (CFMMs). As a trade-off, the asset weights within a PA-AMM pool may deviate from their target weights implied by its invariant curve. Motivated by the optimal index-tracking problem literature, we also propose and solve an optimization problem that balances such deviation and the reduction of LVR.
\end{abstract}

\input{intro}
\input{preliminaries}
\input{pa_amm}
\input{optimization}
\input{conclusion}
\bibliography{references}
\input{appendix}

\end{document}

%% file: intro.tex
\section{Introduction}

\subsection{Background and Related Works}

Automated market makers (AMMs) have gained popularity and become a dominant type of decentralized exchange (DEX) on blockchains due to their simplicity and low computational and operational costs. However, because of their passive nature, AMM quotes cannot be updated in real time, and liquidity providers (LPs) suffer from adverse selection. More specifically, since asset prices are often discovered on centralized exchanges (CEXs) and AMMs lack access to off-chain information about the true value of assets, arbitrageurs can exploit stale AMM quotes and take profit. Such profit is estimated to reach over $230$ million USD between August 2023 and March 2025 \cite{wu2025measuring}.

There has been extensive work on the structure and magnitude of such losses, both theoretically and empirically. In their early attempts practitioners used the notion of ``impermanent loss'' (sometimes called ``divergence loss'') as a cost of liquidity provision \cite{pintail2019understanding_uniswap_returns,pintail2019uniswap_good_deal,aigner2021uniswap_il_riskprofile,deribit2020_il_bancorv2}. In their seminal work of \cite{milionis2022automated}, Milionis et al. proposed a notion of \emph{loss-versus-rebalancing} (LVR), an analogous of markout in the sense that they both measure quality of each trade with price shortly or immediately after the trade, as a metric to measure the loss (i.e., the cost) of liquidity providers (LPs) of AMMs from adverse selection. It has become a go-to metric for measuring loss from CEX-DEX arbitrage. Initially, its closed-form formula was based on assumptions of an ideal situation (Poisson block arrivals, no friction in arbitrageurs' trades, no gas fee), but such assumptions were relaxed and further generalized in subsequent works \cite{milionis2024automated,nezlobin2025loss,schlegel2025arbitrage}. A notion of \emph{predictable loss} (PL) was also proposed in \cite{cartea2023predictable}, which is a generalised version of (LVR) and consists of a convexity cost and an opportunity cost. With additional assumptions such as zero opportunity cost and the asset price following geometric Brownian motion with constant volatility, the PL degenerates to the LVR.

Moreover, studies on optimal liquidity provision, which concern the selection of the fee rate and the liquidity provision range that balances fee revenue from noise traders and losses to arbitrageurs \cite{evans2021optimal,cartea2024decentralized,powers2024tick,alexander2024fees,ma2024cost,volosnikov2024impact,campbell2025optimal}, have emerged. Especially, these studies collectively indicate that the optimal choice is determined by various factors, such as the arrival rate and distribution of order size of noise traders, asset volatility, competition between liquidity providers, and the parameters determined by underlying blockchains, such as gas fee costs and the time interval between two consecutive blocks.

Many mechanisms that mitigate LVR and protect LPs have also been proposed. These proposed mechanisms fall largely into three categories: batch execution of trades \cite{ramseyer2024augmenting,canidio2023arbitrageurs}, dynamic fee rates \cite{CrocSwap2022Designing,cartea2024strategic,egorov2021dynamicpeg,baggiani2025optimal}, and discriminatory quotes based on (inferred) toxicity of orders \cite{bukov2020mooni,adams2024amm,solarcurve2023bip295}. These approaches can be effective, but they often require additional infrastructure (e.g., an off-chain batcher, an administrator, or specialized auction mechanisms), rely on timely and robust information updates (e.g., oracles and/or order-flow inference), and may reduce atomic composability or increase implementation complexity.

There is another lens for viewing the AMMs: as a portfolio of assets that performs rebalancing at each block to keep each asset's capital allocation close to its target weight.\footnote{Note that the target weight of each asset need not be kept fixed. For instance, the weight of assets in a Uniswap V3 LP position may vary as prices move within the liquidity range if the range is finite \cite{uniswap2021whitepaper}. Balancer's liquidity bootstrapping pool (LBP) is an example of CFMM with time-varying weights \cite{balancer_lbp}.} If asset prices move, the relative weights of assets within the portfolio deviate from their target weights, and AMMs rebalance the portfolio to target weights by allowing arbitrageurs to trade freely as long as the invariant is met, which is effectively an indirect way of submitting a market order. Within such a framework, CFMMs and the aforementioned improved mechanisms operate suboptimally in the sense that they immediately remove any deviation beyond a certain threshold from the target weights, that is, if the arbitrage happens in current block and the true price of the asset remains the same in the next block, no trade from arbitrageurs will happen even with zero friction since the opportunity has already gone in the last block. In the optimal index-tracking and optimal execution literature, this is rarely optimal. An optimal policy often comprises two components: a no-trading-zone and an execution plan, which it follows when a weight discrepancy escapes that no-trading-zone \cite{benidis2018optimization,donohue2003optimal,holden2013optimal,garleanu2013dynamic,garleanu2016dynamic}. Optimal execution is usually not a single large market order that is filled immediately, but rather a series of sub-orders executed over time \cite{AlmgrenChriss2001OptimalExecution,obizhaeva2013optimal}. Except \cite{fritsch2024mev}, to our knowledge, there is no study on the mechanisms for AMMs that perform such \emph{gradual} rebalancing to align the price at AMMs to the true price of assets.

\subsection{Our Contributions}

In this paper, we introduce a new class of AMMs, \emph{partially active automated market makers} (PA-AMMs), that controls the \emph{rebalancing speed} by allowing only a fraction of the pool's liquidity to be available for trading in each block. We also provide a rigorous proof that controlling the rebalancing speed is indeed equivalent to determining the fraction of liquidity available in a single block.

Briefly, at the top of each block, the pool partitions its reserves into an \emph{active} portion that can be traded in the current block and a \emph{passive} portion that remains idle. The pool then quotes using only the active reserves and its invariant function \(\varphi\). This partition is refreshed at the top of each block, so that a $\lambda\in(0,1]$ fraction of the total liquidity remains active at all times. The special case of $\lambda=1$ recovers the standard fully-active CFMM with invariant function \(\varphi\).

When $\lambda<1$, the pool executes only a $\lambda$-fraction of the immediate rebalancing that would occur under a CFMM, thereby reducing the exposure to per-block adverse-selection. As a trade-off, the pool's marginal price may deviate from the external reference price, and therefore, the asset allocation implied by its invariant function and the marginal price may diverge from its target allocation. Such tracking error becomes more severe as the activeness parameter \(\lambda\) decreases. Throughout the paper, we formalize this trade-off, define and solve the optimization problem of determining the optimal activeness level \(\lambda^*\) that balances the trade-off. We also observe an increase in liquidity per unit of LP supply, due to the concavity of \(\varphi\).

To our knowledge, this is the first study in the AMM and DeFi literature to explicitly and systematically examine the importance of the execution-speed perspective of rebalancing policy for AMM performance.

\subsection{Roadmap}

The paper is organized as follows. \Cref{sec:prelim} reviews concepts and notation around CFMMs and the underlying model setup that our analysis is based upon. \Cref{sec:pa_amm} defines PA-AMMs and derives the gap dynamics, stationary behavior, and the implied LVR rate. \Cref{sec:opt} formulates and solves the optimal-activeness problem. In \cref{sec:conc} we conclude with implications, limitations (e.g., fees and gas costs), and directions for future work.

%% file: preliminaries.tex
\section{Preliminaries}\label{sec:prelim}

In this section, we recall the preliminary concepts and notations on AMMs that will be used throughout the paper. Note that although our work can be extended to cases with more than two assets, we focus on the two-asset case for notational simplicity. For a more detailed explanation, we refer the reader to check \cite{evans2020liquidity}.

\subsection{Constant Function Market Makers}

Let \emph{invariant function} \(\varphi: \R^2_+ \to \R_+ \) be \(C^3\), strictly concave, and \(1\)-homogeneous. Then the \emph{constant function market maker} (CFMM) with reserves \(R\) is a trader who accepts any trade (change of reserves) \(\Delta R\) as long as the invariant holds, i.e., \(\varphi(R + \Delta R) \geq \varphi(R)\). The value of \(\varphi\) for given \((x,y)\) is called a \emph{liquidity}. \emph{The marginal (spot) price of asset \(X\) in terms of asset \(Y\)} is then given by \[
  \marginP(x,y) = \left. -\frac{dy}{dx} \right|_{R = (x, y)} = \frac{\varphi_x(x,y)}{\varphi_y(x,y)} \, .
\] We will denote the \emph{log marginal price} by \(\marginp(x,y) = \log \marginP(x,y)\). We also define the reparametrization \(R(L, p) = (x(L, p), y(L, p))\) of reserves which satisfies \[
  \varphi(R(L, p)) = L \quad \text{and} \quad \marginp(R(L, p)) = p \, .
\] We will often abbreviate into \(x(p)\) and \(y(p)\) when liquidity \(L\) remains same in the context. We also introduce a notion of \emph{pool value} \(V(L, P, S) \coloneqq S \, x(L,\log P) + y(L, \log P)\), which is the total value of the pool with reserves \(R(L, \log P)\) when the true price of asset \(X\) (in terms of \(Y\)) is \(S\). Again, when it is clear that \(L\) remains constant, we may abbreviate it into \(V(P, S)\) or even \(V(P)\) when \(P = S\). We further assume that \(\varphi\) is sufficiently smooth so that the aforementioned notions are all well-defined.

\begin{example}
  \label{example:g3m_definition}
  A geometric mean market maker (G3Ms) is a CFMM with invariant function \(\varphi\) defined as \[
    \varphi (x_1, x_2, \cdots, x_n) \coloneqq \prod^n_{i=1} x_i^{w_i},
  \] where \((w_i)_{i \in [n]}\) is weight of \(i\)-th asset satisfying \(\sum^n_{i=1} w_i = 1\). For two-assets G3M with reserves \((x,y)\) and weights \((\theta, 1-\theta)\), the marginal price of asset \(X\) in terms of asset \(Y\) is then:
  \[
    \marginP(x, y) =\frac{\theta}{1-\theta} \cdot \frac{y}{x},
  \] and for given log marginal price \(p\) and liquidity \(L\), the reserves \((x, y)\) are: \[
    x(L, p) = L \left(\frac{\theta}{1-\theta} \cdot e^{-p} \right)^{1 - \theta}, \quad y(L, p) = L \left( \frac{1 - \theta}{\theta} \cdot e^p \right)^{\theta}.
  \]
\end{example}

\subsection{Setup}

Throughout the paper, we assume that the quote asset \(Y\) is a stablecoin and the base asset \(X\) is a risky asset whose (true) price follows geometric Brownian motion. That is, \(S_n\), the price of \(X\) at block \(n\), satisfies: \[
  \log S_n = \log S_{n-1} + \varepsilon_n,
\] where \(\varepsilon_n \sim \mathcal N(\mu \Delta t, \sigma^2 \Delta t), \, \forall n \in \mathbb N\) i.i.d. and \(\Delta t\) is the block time.

We further assume that there are infinitely many arbitrageurs competing with one another to arbitrage against the AMMs, and that there is no retail investor for simplicity. They are risk-neutral and myopic, in the sense that they care only about maximizing the immediately available profit at a given block if they win the opportunity. We also assume that gas cost is negligible and there is no friction in trading at centralized exchanges (CEXs), meaning that on CEXs, they can trade asset \(X\) for asset \(Y\) at \(S_n\) for any amount at block \(n\). As a result, for CFMMs with no swap fee, the pool's spot price \(\marginP\) always coincides with the true market price after arbitrageurs' trades.

\subsection{Loss-versus-rebalancing}

Based on this setup, the loss of LPs to arbitrageurs, \emph{loss-versus-rebalancing} (LVR), is measured by the difference in the total value of assets within the pool before and after the arbitrage, i.e., \(V(P, S) - V(S)\).\footnote{Note that this value is always greater than or equal to zero due to the concavity of \(\varphi\).} We recall the classical result introduced in \cite{milionis2022automated} where limit on the block time \(\Delta t \to 0\) is taken:
\begin{prop}[Theorem 1 of \cite{milionis2022automated}]
  Loss-versus-rebalancing takes the form \[
    \LVR_t = \int^t_0 \overline{\LVR}(P_s) \, ds, \quad \forall t \geq 0,
  \] where we define, for \(P \geq 0\), the instantaneous LVR by \[
    \overline{\LVR}(P) \coloneqq -\frac{\sigma^2 P^2}{2} V''(P) \geq 0 \, .
  \] In particular, \(\LVR\) is a non-negative, non-decreasing, and predictable process.
\end{prop}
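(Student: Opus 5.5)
The plan is to compare the CFMM pool value with the value of a self-financing rebalancing strategy that holds the same reserves, following Milionis et al. In the continuous-time limit \(\Delta t \to 0\) the price is \(dP_t = \mu P_t\,dt + \sigma P_t\,dW_t\). Because the arbitrage is frictionless, the pool always trades at \(P_t\), so its reserves are \((x(P_t), y(P_t))\) with \(L\) fixed and its value is \(V_t = V(P_t)\). The rebalancing strategy holds \(x(P_t)\) units of \(X\) at every instant and finances trades at the market price, giving
\[
  R_t = V(P_0) + \int_0^t x(P_s)\,dP_s .
\]
We then define \(\LVR_t \coloneqq R_t - V(P_t)\). This is the continuous analogue of summing the per-block losses \(V(P,S) - V(S)\), and we will check that the two agree in the limit.

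The first step is the envelope identity \(V'(P) = x(P)\). Differentiating \(V(P) = P\,x(P) + y(P)\) gives \(V'(P) = x(P) + P x'(P) + y'(P)\). Moving along the level set \(\varphi = L\) gives \(\varphi_x x' + \varphi_y y' = 0\), and together with \(\marginP = \varphi_x/\varphi_y = P\) this yields \(P x' + y' = 0\). Hence \(V' = x\).

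The second step is to apply Itô's formula, which is legitimate because \(\varphi \in C^3\) makes \(V \in C^2\):
\[
  dV(P_t) = V'(P_t)\,dP_t + \tfrac12 V''(P_t)\,\sigma^2 P_t^2\,dt .
\]
Subtracting this from \(dR_t = x(P_t)\,dP_t = V'(P_t)\,dP_t\) cancels the martingale and drift parts in \(dP_t\). What remains is
\[
  d\LVR_t = -\tfrac{\sigma^2 P_t^2}{2} V''(P_t)\,dt ,
\]
which is the claimed integral formula.

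The third step establishes the sign and the stated properties. We have \(V''(P) = x'(P)\). Strict concavity of \(\varphi\) makes \(x\) decreasing in the price, so \(x' \le 0\) and therefore \(\overline{\LVR} \ge 0\). This can also be read off from the footnote that \(V(P,S) \ge V(S)\), which says that \(V\) is the concave envelope. Since \(\LVR\) is an integral of a nonnegative continuous adapted integrand, it is continuous, non-decreasing, and predictable.

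The main obstacle is reconciling this definition with the discrete per-block loss \(V(P_{n-1},S_n) - V(S_n)\). For that I would Taylor-expand in \(\varepsilon_n\):
\[
  V(P,S) - V(S) = (S-P)\,x(P) - \bigl(V(S) - V(P)\bigr) = -\tfrac12 V''(P)(S-P)^2 + \bigO(|S-P|^3).
\]
Summing over blocks with \(\E[(S-P)^2 \mid \mathcal F] = \sigma^2 P^2 \Delta t + \smallo(\Delta t)\), the sum converges to the integral, by a quadratic-variation argument with the cubic remainder contributing \(\smallo(1)\). This convergence step would require some uniform integrability or a localization of \(V''\), and I would handle that by stopping \(P\) outside a compact set.
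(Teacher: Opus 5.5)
Your proposal is correct. It is essentially the argument of the cited source; the paper does not prove this proposition itself and only refers to Milionis et al. The steps match: the envelope identity \(V'(P)=x(P)\), It\^o's formula applied against the self-financing rebalancing portfolio \(R_t\), and \(V''=x'\le 0\) from convexity of the level set.

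Your reconciliation with the per-block losses \(V(P,S)-V(S)\) is a useful addition, since that is how the paper motivates LVR, and it can be shortened. Your identity \(V(P,S)-V(S)=(S-P)x(P)-(V(S)-V(P))\) shows that the per-block sum equals exactly the P\&L of the portfolio rebalanced once per block, minus \(V(P_t)\). Convergence then follows directly from convergence of left-point Riemann sums to \(\int x(P_s)\,dP_s\), with no Taylor remainder or uniform-integrability argument needed.
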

\begin{example}
  For instance, when the invariant curve is given by weighted geometric mean, i.e., the CFMM is geometric mean market maker (G3M) with weights \((\theta, 1 - \theta)\), the instantaneous rate of LVR becomes \[
    \overline{\LVR}(P) = \frac{\sigma^2}{2} \theta (1 - \theta) V(P) \, .
  \]
\end{example}

This LVR rate for CFMMs will serve as a baseline for measuring PA-AMM's performance in the subsequent section.

%% file: pa_amm.tex
\section{PA-AMM and Its Properties}\label{sec:pa_amm}
In this section, we introduce the partially active automated market maker (PA-AMM) and its properties.

\subsection{The Partial Usage of Reserves and Periodic Rebalancing}
The PA-AMM partitions the pool's total reserves $R_{\text{total}}$ into two distinct components: active reserves ($R_{\text{active}}$) and passive reserves ($R_{\text{passive}}$). The underlying invariant function $\varphi$ enforces trading constraints solely on the active portion, thereby limiting the liquidity available for immediate arbitrage within any single block. The mechanism ensures that the active-passive division is done exactly once at the beginning of each block, triggered by the first interaction (swap, addition, or removal of liquidity) of that block.\footnote{Note that the rebalancing period does not need to be fixed as 1. For instance, rebalancing may occur every \(N \geq 2\) blocks.}

Let $n$ denote the current block height and $n_{\text{last}}$ the block height of the last interaction. The rebalancing rule resets the active reserves to a fraction $\lambda \in (0, 1]$ of the total equity, while keeping intact the remaining $(1-\lambda)$ fraction in the passive reserves, unavailable for trading. This ensures that only a $\lambda$-fraction of the total liquidity is exposed to arbitrageurs at the top of each block. Subsequent trades within the same block execute against this static $R_{\text{active}}$ until the next block increments. The complete logic for rebalancing and swapping is formalized in \Cref{alg:pa_amm_logic}.

\begin{algorithm}[t]
  \caption{Rebalancing and swap of PA-AMM}
  \label{alg:pa_amm_logic}
  \begin{algorithmic}[1]
    \State \textbf{Global State:}
    \State \quad $R_{\text{active}}, R_{\text{passive}} \in \mathbb{R}_+^2$ \Comment{Current active and passive reserves}
    \State \quad $n_{\text{last}} \in \mathbb{N}$ \Comment{Block number of last rebalance}
    \State \quad $\lambda \in (0, 1]$ \Comment{Activeness parameter}

    \vspace{0.2cm}

    \Procedure{Rebalance}{}
    \If{block.number $> n_{\text{last}}$} \Comment{Check if this is the first interaction in the block}
    \State $R_{\text{total}} \gets R_{\text{active}} + R_{\text{passive}}$
    \State $R_{\text{active}} \gets \lambda \cdot R_{\text{total}}$ \Comment{Reset active reserves to $\lambda$ portion of total reserves}
    \State $R_{\text{passive}} \gets (1 - \lambda) \cdot R_{\text{total}}$ \Comment{Remaining liquidity is kept idle}
    \State $n_{\text{last}} \gets \text{block.number}$
    \EndIf
    \EndProcedure

    \vspace{0.2cm}

    \Procedure{Swap}{$\Delta x, \Delta y$} \Comment{Executes an exchange $\Delta x, \Delta y$}
    \State \Call{Rebalance}{}
    \State \Assert $\varphi(R_\text{active} + (\Delta x, \Delta y)) \geq \varphi(R_\text{active}) $ \Comment{Check if invariant is preserved}
    \State $R_{\text{active}}^x \gets R_{\text{active}}^x + \Delta x$
    \State $R_{\text{active}}^y \gets R_{\text{active}}^y + \Delta y$
    \EndProcedure
  \end{algorithmic}
\end{algorithm}

We remark that the rebalancing at most once per block is crucial; without such a limit, the trader can split her trade into many smaller trades and may obtain a better overall quote. In particular, in the extreme case where rebalancing occurs after each trade and the trader splits the order into infinitely many orders of infinitesimal size, the quote given by PA-AMM becomes the same as that of CFMM.\footnote{The similar happens in \cite{canidio2023arbitrageurs}, which makes batching to be essential in their mechanism.} This observation also tells us that while the liquidity available within \emph{a single block} is only \(\lambda\)-fraction of corresponding CFMM with the same reserves, over \emph{multiple blocks}, both PA-AMM and CFMM provide the same amount of liquidity to traders, and this is what separates PA-AMM from simply depositing \(\lambda\)-fraction of capital to CFMMs.\footnote{In the sense that the trader should divide the order into many sub-orders and gradually submit them over multiple blocks for optimal execution, one may consider PA-AMM as a form of implementation of a fully continuous exchange proposed in \cite{kyle2017toward}, given that the block arrival interval \(\Delta t\) is short enough.}

\subsection{The Dynamics}
We now derive the various properties of PA-AMM for the case of \( \varphi \) being a G3M with weights \((\theta, 1 - \theta)\), i.e., \(\varphi(x, y) = x^\theta y^{1-\theta}\) for \(\theta \in (0, 1)\). Note that the results can be generalized to a broader class of CFMMs which satisfy certain conditions on the regularity of \(\varphi\). The detailed conditions are given in the \cref{rmk:extension}.

Let \(R_n\), \(L_n\), and \(p_n\) be total reserves, liquidity, and log marginal price calculated from total reserves \emph{after} the arbitrageurs' trades at block \(n\) (i.e., \(L_n = \varphi(R_n), p_n = \marginp(R_n), R_n = R(L_n, p_n)\)). Let \(s_n\) be the log of the true market price at block \(n\) which follows Brownian motion as we discussed in \cref{sec:prelim}, so that \(\varepsilon_n \coloneqq (s_n - s_{n - 1}) \sim \mathcal N (\mu \Delta t, \sigma^2 \Delta t )\). We define \(g_n^\mathrm{top} \coloneqq s_n - p_{n-1}\) as the gap between the true market price and the marginal price of active reserves \emph{before} the arbitrageurs' trades at block \(n\). Throughout the paper, we will often denote it by \(g_n\) when no confusion arises. We define \(g_n^\mathrm{bot} \coloneqq s_n-p_n \) as the gap between the spot price of the asset at the AMM pool and the true price \emph{after} the arbitrageur's trades as well. The following proposition tells us that there exists a unique stationary distribution \(\pi_{\Delta t} \) of the gap process \( \{g_n\}_{n \in \mathbb N} \) with the estimation on its second moment up to leading order.

\begin{prop}
  \label{prop:gap_stationary}
  For any \(\Delta t > 0 \) and \(\lambda \in (0,1] \) there exists a unique stationary distribution \(\pi_{\Delta t} \) of the top-of-block gap process \( \{g_n\}_{n \in \mathbb N} \), and for \(g \sim \pi_{\Delta t} \), as \(\Delta t \to 0\), we have
  \begin{equation}\label{eq:gap_second_moment}
    \E [g^2] = \frac{\sigma^2 \Delta t}{\lambda (2 - \lambda)} + \bigO (\Delta t^{3/2}) \, .
  \end{equation}
\end{prop}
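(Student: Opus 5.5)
The plan is to reduce the gap process to a one-dimensional Markov chain of the form \(g_{n+1} = f(g_n) + \varepsilon_{n+1}\), where \(f\) is a deterministic contraction, and then read everything off that recursion.

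\emph{Step 1: the one-block update.} At the top of block \(n\) the total reserves are \(R(L_{n-1},p_{n-1})\). The active part is \(\lambda R(L_{n-1},p_{n-1}) = R(\lambda L_{n-1},p_{n-1})\) by \(1\)-homogeneity. Competing arbitrageurs move the active reserves along the curve \(\varphi=\lambda L_{n-1}\) until the active marginal price equals \(s_n\), so afterwards the active reserves are \(R(\lambda L_{n-1}, s_n)\). The passive part \((1-\lambda)R(L_{n-1},p_{n-1})\) is untouched.

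Using the explicit G3M formulas of \cref{example:g3m_definition} and \(\marginP = \tfrac{\theta}{1-\theta}\,y/x\), the new total log price is \(p_n = p_{n-1} + h(g_n)\), with
\[
h(g) = \log\bigl(1-\lambda+\lambda e^{\theta g}\bigr) - \log\bigl(1-\lambda+\lambda e^{-(1-\theta) g}\bigr).
\]
The constants and the factor \(L_{n-1}\) cancel. Hence \(g_{n+1} = s_{n+1}-p_n = f(g_n) + \varepsilon_{n+1}\), where \(f(g) = g - h(g)\) and \(\varepsilon_{n+1}\) is independent of \(g_n\). In particular the dynamics do not depend on \(L\).

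\emph{Step 2: \(f\) is a contraction.} Differentiating gives
\[
h'(g) = \theta\,\frac{\lambda e^{\theta g}}{1-\lambda+\lambda e^{\theta g}} + (1-\theta)\,\frac{\lambda e^{-(1-\theta)g}}{1-\lambda+\lambda e^{-(1-\theta)g}}.
\]
Both fractions lie in \((0,1]\), so \(h'\in(0,1]\). The limits of \(h'\) as \(g\to\pm\infty\) are \(\theta\) and \(1-\theta\), and \(h'\) is continuous and positive. Hence \(h'\ge c\) for some \(c=c(\lambda,\theta)>0\), so \(0\le f'\le 1-c\). Since \(f(0)=0\), this gives \(|f(g)|\le(1-c)|g|\).

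Existence and uniqueness of \(\pi_{\Delta t}\) then follows from the theory of iterated random Lipschitz maps with contraction constant \(<1\) and an integrable noise term (Diaconis--Freedman). Alternatively one can use a Wasserstein-contraction argument: two copies driven by the same \(\varepsilon\) merge geometrically.

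\emph{Step 3: moment bounds and the second-moment expansion.} Let \(g\sim\pi_{\Delta t}\) and \(\varepsilon\) be independent. From \(|g_{n+1}|\le(1-c)|g_n|+|\varepsilon_{n+1}|\) and Minkowski's inequality at stationarity, \(\|g\|_k \le \|\varepsilon\|_k/c = \bigO(\Delta t^{1/2})\) for every \(k\). This uniform bound is the step I regard as the crux, since it controls all Taylor remainders.

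Next expand \(f(g) = (1-\lambda)g + a g^2 + r(g)\), with \(|r(g)|\le C|g|^3\). The constant \(C\) is global because \(f''\) is bounded, which follows from the explicit form of \(h\); this gives \(|r(g)|\le C g^2\min(1,|g|)\), which suffices. Stationarity gives
\[
\E[g^2] = \E[f(g)^2] + 2\mu\Delta t\,\E[f(g)] + \sigma^2\Delta t + \mu^2\Delta t^2 .
\]
Here \(\E[f(g)^2] = (1-\lambda)^2\E[g^2] + \bigO(\E|g|^3) = (1-\lambda)^2\E[g^2] + \bigO(\Delta t^{3/2})\). Also \(\Delta t\,\E[f(g)] = \bigO(\Delta t^{3/2})\). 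Therefore
\[
\bigl(1-(1-\lambda)^2\bigr)\E[g^2] = \sigma^2\Delta t + \bigO(\Delta t^{3/2}),
\]
and \(1-(1-\lambda)^2 = \lambda(2-\lambda)>0\) for \(\lambda\in(0,1]\). This yields \eqref{eq:gap_second_moment}.

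The main care needed is that the constants \(c\) and \(C\) are independent of \(\Delta t\), which holds because they depend only on \(\lambda\) and \(\theta\).
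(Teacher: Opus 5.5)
Your proposal is correct and follows essentially the same route as the paper: the exact G3M gap recursion \(g_{n+1}=\Psi(g_n)+\varepsilon_{n+1}\), a uniform contraction bound on \(\Psi'\), existence and uniqueness via contractive iterated random functions, the moment bounds \(\|g\|_m\le\|\varepsilon\|_m/(1-\rho)\), and the stationary second-moment identity with Taylor remainders controlled by a global bound on \(\Psi''\). The differences are cosmetic. The paper gets the explicit constant \(\rho=1-\lambda\min\{\theta,1-\theta\}\) where you use a continuity/limit argument; your stated limits \(\theta,1-\theta\) hold only for \(\lambda<1\), since for \(\lambda=1\) one has \(h'\equiv 1\), but this is harmless. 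The paper also uses only a first-order expansion with \(|r(g)|\le g^2/8\), so your extra quadratic term \(ag^2\) is unnecessary.
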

\begin{proof}
  See \cref{app:proof_gap_stationary}.
\end{proof}

\begin{remark}\label{rmk:extension}
  The proof of \Cref{prop:gap_stationary} relies on two structural properties of the induced one-step gap map: (i) the gap update can be written as an iterated random function \(g_{n+1} = \Psi(g_n) + \varepsilon_{n+1}\) (or more generally \(g_{n+1} = \Psi (p_{n-1}, g_n) + \varepsilon_n\)), and (ii) the map \(\Psi\) is uniformly contractive, i.e., \( \sup_g | \Psi'(g) | < 1 \) or \( \sup_{p, \, p+g \in K} | \partial_g \Psi(p,g) | < 1 \) on a compact price region \( K \).
  Therefore, for any CFMM whose marginal-price map \( \marginp\) is sufficiently smooth and non-degenerate on a compact region \( K \) of log prices, the same coupling/contraction argument yields existence and uniqueness of a stationary distribution for the gap process as long as the pool state and the reference price remain in \( K \). In practice, this “compactness” assumption can be interpreted as restricting attention to horizons where the asset price stays within a plausible range; the constants in the moment bounds then depend on \(K\).
\end{remark}

From above proposition we derive several consequences that will be used throughout the paper. All of them follow directly from (i) the explicit one-block update map and (ii) a second-order Taylor expansion in the gap. The following corollary provides the justification for replacing the complex exact dynamics with a tractable linear AR(1) process in \cref{sec:opt}. Specifically, it shows that the nonlinearity in the gap process is of a higher order than the quadratic cost terms, rendering it negligible in the derivation of the leading-order optimal control.

\begin{coro}
  \label{cor:linearized_gap_valid}
  Let \(\varphi\) be a G3M with weights \((\theta,1-\theta)\). Then, for any fixed \(\delta>0\), there exists \(C_\delta<\infty\) depending only on \(\delta,\theta,\lambda\), such that for all \(|g|\le \delta\),
  \begin{equation}
    g_{n+1} = \varepsilon_{n+1} + (1-\lambda) g_n + r(g_n),
    \qquad |r(g_n)|\le C_\delta g_n^2.
    \label{eq:linearized_g3m_with_remainder}
  \end{equation}
  In particular, since \(g=\bigOp (\sqrt{\Delta t})\), the nonlinearity satisfies \(r(g)=\bigOp (\Delta t)\), and for objectives whose leading term is quadratic in the gap (e.g., $F(g,\lambda)=a(\lambda) g^2 + \smallo(g^2)$), the contribution of the nonlinear remainder is of \( \bigO(\Delta t^{3/2}) \) while \(F \) is of \(\bigO(\Delta t)\). Therefore replacing the exact one-block gap map by its AR(1) linearization does not change the leading-order optimal control in \cref{sec:opt}.
\end{coro}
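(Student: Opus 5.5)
The plan is to write the exact one-block gap map for the G3M in closed form, then Taylor expand it around zero gap.

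\emph{Step 1: the exact map.} Since \(\varphi\) is \(1\)-homogeneous, scaling reserves does not change the marginal price. So at the top of block \(n+1\) the active reserves \(\lambda R_n\) have liquidity \(\lambda L_n\) and log marginal price \(p_n\). Competitive arbitrage with zero fee moves the active reserves to \(R(\lambda L_n, s_{n+1})\), while the passive part stays at \((1-\lambda)R(L_n,p_n)\). Using the explicit reserves of \cref{example:g3m_definition}, \(x(L,p)\propto L e^{-(1-\theta)p}\) and \(y(L,p)\propto L e^{\theta p}\), with constants chosen so that \(R(L,p)\) has log price \(p\). Write \(g = s_{n+1}-p_n\) for the top-of-block gap \(g_{n+1}\). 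The new log price of the total reserves is then
\[
p_{n+1} = p_n + h(g), \qquad h(g) \coloneqq \log\bigl(\lambda e^{\theta g} + 1-\lambda\bigr) - \log\bigl(\lambda e^{-(1-\theta) g} + 1-\lambda\bigr).
\]
The dependence on \(L_n\) cancels. Hence
\[
g_{n+2} = s_{n+2}-p_{n+1} = \varepsilon_{n+2} + g_{n+1} - h(g_{n+1}),
\]
which is an iterated random function \(\Psi(g)=g-h(g)\) depending on \(g\) alone. After relabelling indices, this is exactly the form asserted in \eqref{eq:linearized_g3m_with_remainder}.

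\emph{Step 2: the expansion.} We have \(h(0)=0\) and
\[
h'(0)=\lambda\theta+\lambda(1-\theta)=\lambda ,
\]
so \(\Psi(g)=(1-\lambda)g + r(g)\) with \(r(g) \coloneqq \lambda g - h(g)\), where \(r(0)=r'(0)=0\). Since \(h\) is \(C^\infty\), Taylor's theorem with Lagrange remainder gives \(|r(g)|\le \tfrac12 \sup_{|u|\le\delta}|h''(u)|\, g^2\) for \(|g|\le\delta\). This defines \(C_\delta\), which depends only on \(\delta,\theta,\lambda\).

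In fact \(h'\) and \(h''\) are bounded on all of \(\R\): each logarithm is a smoothed log-sum-exp with bounded derivatives. So one may even take a global constant, which removes the need to treat tails separately.

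\emph{Step 3: orders of magnitude.} By \cref{prop:gap_stationary}, \(\E[g^2]=\bigO(\Delta t)\) under \(\pi_{\Delta t}\). Chebyshev's inequality then gives \(g=\bigOp(\sqrt{\Delta t})\), and hence \(r(g)=\bigOp(\Delta t)\). For the objective, take \(F(g,\lambda)=a(\lambda)g^2+\smallo(g^2)\) and let \(\tilde g_{n+1}\) be the AR(1) update, so that \(g_{n+1} = \tilde g_{n+1} + r(g_n)\). Expanding,
\[
F(g_{n+1})-F(\tilde g_{n+1}) = 2a\,\tilde g_{n+1}\, r(g_n) + a\, r(g_n)^2 + \smallo(\cdot).
\]
This is \(\bigO(\Delta t^{1/2}\cdot\Delta t)=\bigO(\Delta t^{3/2})\) in expectation, by Cauchy--Schwarz and the fourth-moment bound \(\E[g^4]=\bigO(\Delta t^2)\). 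Meanwhile \(\E F = \bigO(\Delta t)\). So the leading-order objective, and therefore its minimiser in \(\lambda\), is unchanged.

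\emph{Main obstacle.} The delicate point is Step 3. It needs moment bounds beyond the second moment, namely \(\E[g^4]=\bigO(\Delta t^2)\) under the stationary law. It also needs the error to stay \(\bigO(\Delta t^{3/2})\) uniformly when it is summed over the stationary regime, rather than only for a single step.

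I would get the fourth-moment bound from the contraction \(\sup|\Psi'|<1\). The idea is to compare \(g_n\) with a geometric sum of Gaussian increments, e.g.\ \(|g_{n+1}|\le \kappa|g_n|+|\varepsilon_{n+1}|\) with \(\kappa<1\). For the uniformity, I would either compare the stationary second moments of the exact and linearised chains, which already agree to \(\bigO(\Delta t^{3/2})\) by \cref{prop:gap_stationary}, or iterate the one-step bound against the geometric decay \((1-\lambda)^k\).
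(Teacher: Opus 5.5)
Your proposal is correct and follows essentially the same route as the paper, which obtains this corollary from the proof of \cref{prop:gap_stationary}. The paper uses the same exact map \(\Psi(g)=g-h(g)\), a second-order Taylor expansion with the explicit global bound \(\sup_{\R}|\Psi''|\le\frac14\) (hence \(|r(g)|\le\frac18 g^2\) for all \(g\), which is your global-boundedness remark made quantitative), and the moment bounds \(\|g\|_m\le(1-\rho)^{-1}\|\varepsilon\|_m\) from the contraction, while your trajectory-level comparison via the geometric decay \((1-\lambda)^k\) is a small extra step that the paper leaves implicit.
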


We next show that the liquidity per unit of LP supply grows over time even with a zero swap fee rate, due to the concavity of \(\varphi\). Since we focus on the G3M case, the statement follows from an explicit closed-form computation and a Taylor expansion. A generalization to typical CFMMs, omitted here, contains a \(p_{n-1}\)-dependent coefficient in \(g_n^2\) term.

\begin{coro}
  \label{cor:liquidity_growth_g3m}
  Let \(\varphi(x,y)=x^\theta y^{1-\theta}\) for \(\theta \in (0,1)\) and \(\ell_n \coloneqq \log L_n\), the \emph{log-liquidity} of pool at block
  \(n\) after arbitrage. Then as \(g_n\to 0\),
  \begin{equation}
    \ell_n-\ell_{n-1}
    = \frac12 \lambda(1-\lambda)\,\theta(1-\theta)\, g_n^2 \;+\; O(|g_n|^3).
    \label{eq:liq_growth_g3m}
  \end{equation}
  Consequently, under the stationary regime \(g_n\sim\pi_{\Delta t}\) and \(\Delta t\to 0\),
  \begin{equation}
    \lim_{\Delta t\to 0} \frac{\E[\ell_n-\ell_{n-1}]}{\Delta t}
    = \frac12 \lambda(1-\lambda)\theta(1-\theta)\lim_{\Delta t\to 0}\frac{\E[g_n^2]}{\Delta t}
    = \frac{\theta(1-\theta)\sigma^2}{2}\cdot \frac{1-\lambda}{2-\lambda}.
    \label{eq:inst_liq_growth_g3m}
  \end{equation}
\end{coro}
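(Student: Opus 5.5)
The plan is to write the one-block update of the total reserves in closed form for the G3M and then expand in the gap. Everything rests on the closed form of \cref{example:g3m_definition}.

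\textbf{Step 1: closed form of the log-liquidity increment.} Write \(x(L,p)=L\,a\,e^{-(1-\theta)p}\) and \(y(L,p)=L\,b\,e^{\theta p}\), with \(a=(\theta/(1-\theta))^{1-\theta}\) and \(b=((1-\theta)/\theta)^{\theta}\). Note that \(a^\theta b^{1-\theta}=1\).

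At the top of block \(n\) the total reserves are \(R(L_{n-1},p_{n-1})\). By 1-homogeneity, the active part is \(\lambda R(L_{n-1},p_{n-1})=R(\lambda L_{n-1},p_{n-1})\), and the passive part is \((1-\lambda)R(L_{n-1},p_{n-1})\).

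With zero fee and competitive myopic arbitrageurs, the active reserves are moved along their own invariant curve to marginal price \(s_n\). They therefore become \(R(\lambda L_{n-1},s_n)\). The new total reserves are
\[
x_n = L_{n-1}a\bigl(\lambda e^{-(1-\theta)s_n}+(1-\lambda)e^{-(1-\theta)p_{n-1}}\bigr),\qquad
y_n = L_{n-1}b\bigl(\lambda e^{\theta s_n}+(1-\lambda)e^{\theta p_{n-1}}\bigr).
\]
Factor out \(e^{-(1-\theta)p_{n-1}}\) and \(e^{\theta p_{n-1}}\), and take \(\ell_n=\theta\log x_n+(1-\theta)\log y_n\). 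The \(p_{n-1}\)-terms cancel, since \(-\theta(1-\theta)+(1-\theta)\theta=0\), and the constants cancel, since \(a^\theta b^{1-\theta}=1\). This gives the exact identity
\[
\ell_n-\ell_{n-1}=f(g_n),\qquad f(g)\coloneqq\theta\log\bigl(1+\lambda(e^{-(1-\theta)g}-1)\bigr)+(1-\theta)\log\bigl(1+\lambda(e^{\theta g}-1)\bigr).
\]

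\textbf{Step 2: Taylor expansion.} We use \(\log(1+\lambda(e^u-1))=\lambda u+\tfrac12\lambda(1-\lambda)u^2+O(u^3)\). The linear terms cancel: \(-\theta(1-\theta)\lambda g+(1-\theta)\theta\lambda g=0\). The quadratic coefficient is
\[
\tfrac12\lambda(1-\lambda)\bigl[\theta(1-\theta)^2+(1-\theta)\theta^2\bigr]=\tfrac12\lambda(1-\lambda)\theta(1-\theta).
\]
This proves \eqref{eq:liq_growth_g3m}.

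We also want the remainder bound to hold globally. Note that \(f\) is smooth and grows at most linearly, because each term is the log of a sum of exponentials. Hence \(|f(g)-\tfrac12\lambda(1-\lambda)\theta(1-\theta)g^2|\le C|g|^3\) for all \(g\in\R\). Near \(0\) this is Taylor's theorem. For \(|g|\ge1\), both \(|f|\) and the quadratic term are \(O(|g|^3)\).

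\textbf{Step 3: stationary expectation.} Taking expectations under \(\pi_{\Delta t}\) gives
\[
\E[\ell_n-\ell_{n-1}]=\tfrac12\lambda(1-\lambda)\theta(1-\theta)\E[g^2]+O(\E|g|^3).
\]
By \cref{prop:gap_stationary}, \(\E[g^2]/\Delta t\to\sigma^2/(\lambda(2-\lambda))\). Substituting, the limit is
\[
\tfrac12\lambda(1-\lambda)\theta(1-\theta)\cdot\frac{\sigma^2}{\lambda(2-\lambda)}=\frac{\theta(1-\theta)\sigma^2}{2}\cdot\frac{1-\lambda}{2-\lambda},
\]
which is \eqref{eq:inst_liq_growth_g3m}.

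\textbf{Main obstacle: the cubic moment.} It remains to show \(\E|g|^3=O(\Delta t^{3/2})\); I expect this to be the only delicate step. I would use the structure from \cref{rmk:extension}: \(g_{n+1}=\Psi(g_n)+\varepsilon_{n+1}\) with \(\Psi(0)=0\) and \(\sup|\Psi'|\le\rho<1\), so \(|\Psi(g)|\le\rho|g|\). Under stationarity, Minkowski's inequality in \(L^3\) gives \(\|g\|_3\le\rho\|g\|_3+\|\varepsilon\|_3\), hence \(\|g\|_3\le\|\varepsilon\|_3/(1-\rho)\). The Gaussian moment satisfies \(\|\varepsilon\|_3=O(\sqrt{\Delta t})\), where the drift term \(\mu\Delta t\) is of lower order. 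Therefore \(\E|g|^3=O(\Delta t^{3/2})\), and after dividing by \(\Delta t\) this term vanishes in the limit.

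Finiteness of \(\|g\|_3\), which the absorption step needs, follows from the same contraction by iterating from \(g_0=0\) and passing to the limit. For the G3M, the contraction \(\rho<1\) should be checked directly from the explicit gap map. That map is \(g_{n+1}=\varepsilon_{n+1}+g_n-(p_n-p_{n-1})\), where \(p_n-p_{n-1}\) is a smooth function of \(g_n\) read off from \(y_n/x_n\) in Step 1. Its derivative lies in \((0,1)\) and is bounded away from \(0\) uniformly, so \(\sup|\Psi'|<1\). This is the same estimate underlying \cref{prop:gap_stationary}.
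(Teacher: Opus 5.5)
Your proposal is correct and follows the route the paper indicates for this corollary (explicit closed-form computation plus Taylor expansion). You derive the exact identity \(\ell_n-\ell_{n-1}=\theta\log(1-\lambda+\lambda e^{-(1-\theta)g_n})+(1-\theta)\log(1-\lambda+\lambda e^{\theta g_n})\) from the G3M reserves, expand it to second order with a global cubic remainder so that expectations can be taken, and then use the stationary second moment from \cref{prop:gap_stationary}. The bound \(\E[|g|^3]=O(\Delta t^{3/2})\), which you single out as the main obstacle, is already proved in the paper's proof of \cref{prop:gap_stationary} by the same contraction-plus-Minkowski argument, so you can simply cite it.
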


Notice that the coefficient \(\lambda(1-\lambda)\) in \Cref{eq:liq_growth_g3m} is maximized at \(\lambda=\tfrac12\), i.e., for a fixed small gap \(g_n\), this choice maximizes the leading-order \emph{per-block} increase of log-liquidity.\footnote{This is what FM-AMM introduced in \cite{canidio2023arbitrageurs} is maximizing for. The main difference is that FM-AMM closes the gap entirely, rather than partially, by leveraging the presence of a batcher that orchestrates the entire order flow from various entities.} In contrast, the instantaneous rate is decreasing in \(\lambda\) on \((0,1]\), and is maximized in the limit \(\lambda\downarrow 0\).

Next is the asymptotic analysis of loss-versus-rebalancing (LVR) for PA-AMM. Again, in the G3M case, this is an explicit computation followed by a Taylor expansion.

\begin{coro}
  \label{cor:lvr_g3m}
  Let \(V_n\) be the pool value at block \(n\) after arbitrage with reference price as the pool's marginal price,
  \[
    V_n \coloneqq V(L_n, P_n)= e^{p_n} x(L_n, p_n) + y(L_n, p_n).
  \]
  Then as \(g_n\to 0\), the loss-versus-rebalancing at block \(n\),
  \[
    \LVR_n \coloneqq V(L_{n-1},P_{n-1}, S_n) - V(L_n,P_n, S_n),
  \]
  satisfies
  \begin{equation}
    \LVR_n
    = \frac{\lambda}{2}\,\theta(1-\theta)\,V_{n-1}\, g_n^2 \;+\; O(|g_n|^3).
    \label{eq:lvr_g3m}
  \end{equation}
  In particular, the \emph{normalized} LVR \(\widetilde{\LVR}_n\coloneqq \LVR_n/V_{n-1}\) satisfies
  \begin{equation}
    \widetilde{\LVR}_n = \frac{\lambda}{2}\,\theta(1-\theta)\, g_n^2 \;+\; O(|g_n|^3).
    \label{eq:norm_lvr_g3m}
  \end{equation}
  Under the stationary regime \(g_n\sim\pi_{\Delta t}\),
  \begin{equation}
    \lim_{\Delta t\to 0} \frac{\E[\widetilde{\LVR}_n]}{\Delta t}
    = \frac{\lambda}{2}\theta(1-\theta)\lim_{\Delta t\to 0}\frac{\E[g_n^2]}{\Delta t}
    = \frac{\theta(1-\theta)\sigma^2}{2(2-\lambda)}.
    \label{eq:inst_norm_lvr_rate_g3m}
  \end{equation}
\end{coro}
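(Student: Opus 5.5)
The plan is to compute \(\LVR_n\) in closed form by exploiting the fact that the passive reserves are untouched within a block, reduce it to a scaled copy of the usual CFMM loss, and then Taylor-expand in \(g_n\).

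\textbf{Step 1: reduction to the active part.} At the top of block \(n\) the total reserves are \(R(L_{n-1},p_{n-1})\). By \Cref{alg:pa_amm_logic} the active part is \(\lambda R(L_{n-1},p_{n-1})=R(\lambda L_{n-1},p_{n-1})\), using \(1\)-homogeneity of \(\varphi\) and the fact that \(\marginp\) is \(0\)-homogeneous. The passive part is \((1-\lambda)R(L_{n-1},p_{n-1})\).

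Under our frictionless, myopic and competitive arbitrage assumptions, the arbitrageurs move the active reserves along the curve \(\varphi=\lambda L_{n-1}\) until the marginal price equals \(s_n\). The active part therefore becomes \(R(\lambda L_{n-1},s_n)=\lambda R(L_{n-1},s_n)\), and the total reserves after arbitrage are
\[
R_n=\lambda R(L_{n-1},s_n)+(1-\lambda)R(L_{n-1},p_{n-1}).
\]
Valuing both states at \(S_n\), the passive contribution cancels. Hence
\[
\LVR_n=\lambda\bigl[V(L_{n-1},P_{n-1},S_n)-V(L_{n-1},S_n,S_n)\bigr],
\]
which is exactly \(\lambda\) times the one-block loss of a CFMM with liquidity \(L_{n-1}\). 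This identity holds for any \(\varphi\); only Step 2 uses the G3M form.

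\textbf{Step 2: explicit G3M computation.} From \Cref{example:g3m_definition}, \(e^{p}x(L,p)=\theta V(L,P)\) and \(y(L,p)=(1-\theta)V(L,P)\), with \(V(L,P)\propto L e^{\theta p}\). Write \(s_n=p_{n-1}+g_n\). Then
\[
V(L_{n-1},P_{n-1},S_n)=V_{n-1}\bigl(\theta e^{g_n}+1-\theta\bigr),
\qquad
V(L_{n-1},S_n,S_n)=V_{n-1}e^{\theta g_n}.
\]
It follows that
\[
\LVR_n=\lambda V_{n-1}\bigl(\theta e^{g_n}+1-\theta-e^{\theta g_n}\bigr).
\]
Expanding the bracket, the constant and linear terms cancel and the quadratic term is \(\tfrac12(\theta-\theta^2)g_n^2\). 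This gives \eqref{eq:lvr_g3m}, and dividing by \(V_{n-1}\) gives \eqref{eq:norm_lvr_g3m}. In fact we obtain the explicit remainder bound \(|r(g)|\le C_\theta |g|^3 e^{|g|}\), valid for all \(g\) rather than only for small \(g\).

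\textbf{Step 3: stationary limit, the main obstacle.} Taking expectations under \(\pi_{\Delta t}\),
\[
\E[\widetilde{\LVR}_n]=\tfrac{\lambda}{2}\theta(1-\theta)\E[g^2]+O\bigl(\E[|g|^3e^{|g|}]\bigr).
\]
The leading term, combined with \Cref{prop:gap_stationary}, gives \(\tfrac{\lambda}{2}\theta(1-\theta)\sigma^2\Delta t/(\lambda(2-\lambda))\), which is exactly \eqref{eq:inst_norm_lvr_rate_g3m}. The remaining task is to show \(\E[|g|^3e^{|g|}]=o(\Delta t)\). This needs moment control of the stationary gap beyond the second moment, which \Cref{prop:gap_stationary} does not state.

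I would get this control from the contraction structure in \Cref{rmk:extension}. For the G3M, the one-step map satisfies \(g_{n+1}=\Psi(g_n)+\varepsilon_{n+1}\) with \(\Psi(0)=0\) and \(\sup|\Psi'|\le\rho<1\). Iterating gives the pathwise bound \(|g|\le\sum_{k\ge0}\rho^k|\varepsilon_{-k}|\) under the stationary law. Since the \(\varepsilon_k\) are Gaussian with standard deviation \(O(\sqrt{\Delta t})\), Minkowski's inequality yields \(\E|g|^q=O(\Delta t^{q/2})\) for every \(q\), and also uniformly bounded exponential moments \(\E e^{2|g|}\). Cauchy--Schwarz then gives \(\E[|g|^3e^{|g|}]=O(\Delta t^{3/2})\). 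Dividing by \(\Delta t\) and letting \(\Delta t\to0\) completes the argument.
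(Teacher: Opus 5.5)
Your proposal is correct and follows essentially the route the paper indicates. The paper's route is an explicit G3M computation of the one-block loss, then a Taylor expansion of \(\theta e^{g}+1-\theta-e^{\theta g}\), then the stationary second moment from \Cref{prop:gap_stationary}. Your reduction \(\LVR_n=\lambda\bigl[V(L_{n-1},P_{n-1},S_n)-V(L_{n-1},S_n,S_n)\bigr]\), which holds because the passive reserves are untouched and value is linear in reserves, is a clean way to carry out the first step. Your global remainder bound \(|r(g)|\le C_\theta |g|^3 e^{|g|}\), together with the contraction-based polynomial and exponential moment bounds, is a useful addition: it justifies taking expectations of the \(O(|g_n|^3)\) term, which the paper leaves implicit since its proof of \Cref{prop:gap_stationary} records only polynomial moments.
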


By putting \(\theta = \frac{1}{2}\) and \(\lambda = 1\) into \Cref{cor:lvr_g3m}, one recovers the result in \cite{milionis2022automated} on the instantaneous rate of LVR for CPMMs.

We note that \Cref{cor:liquidity_growth_g3m,cor:lvr_g3m} together reveal a fundamental trade-off between the price gap and LVR reduction. Such a trade-off persists even when the true price is not a Brownian motion; \Cref{fig:lvr_price_gap_trade_off_timeseries} illustrates the results of the simulation using historical ETH price data. In short, as \(\lambda\) decreases, even in the absence of swap fees, liquidity per unit of LP token supply may grow over time, and LVR decreases. However, smaller \(\lambda\) results in more tracking error and reduced liquidity per single block.

In a more realistic model that allows positive swap fee rates and accounts order flow from noise traders, the two will together have a mixed effect. Reduced \(\lambda\) allows the LPs to set a lower swap fee rate and attract more order flow from retail while bearing the same adverse selection cost as its CFMM equivalent. However, as liquidity available per block declines, the price impact as a function of swap size will become steeper and eventually overwhelm the benefit of the lower fee rate. Whether this results in an increased share of noise traders' order flow depends on how one models and sets the parameters of their demand, and it is beyond the scope of this paper.

\begin{figure}[t]
  \centering
  \includegraphics[width=0.9\textwidth]{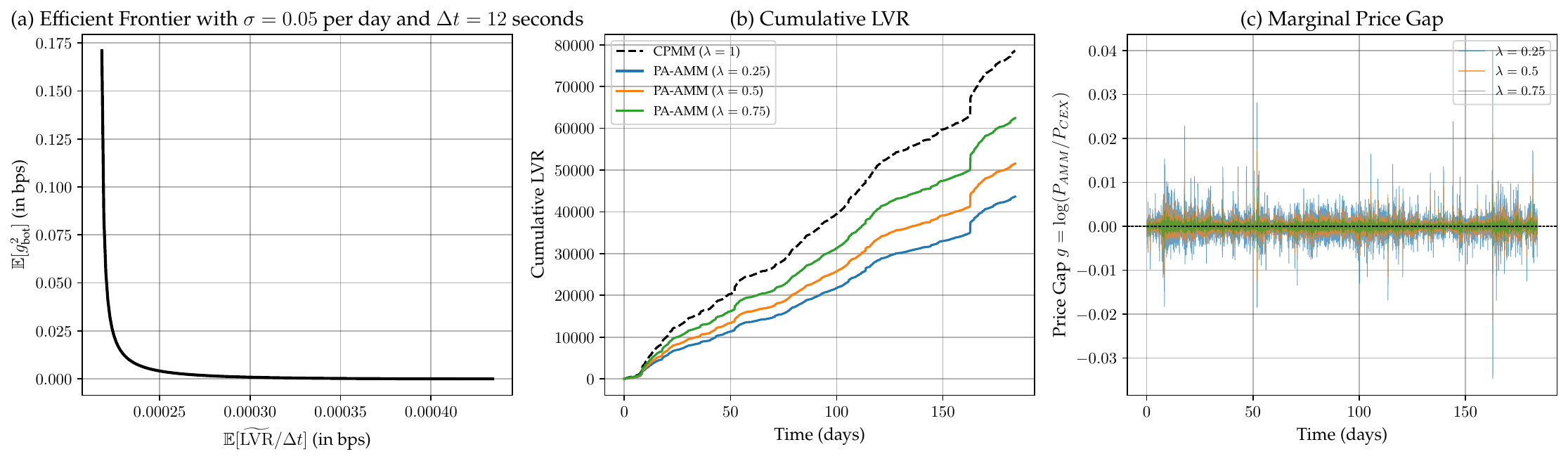}
  \caption{
    \textbf{Left.} The efficient frontier of instantaneous LVR and the variance of price gap.
    \textbf{Middle.} The cumulative LVR over time for \(\lambda \in [0.25, 0.5, 0.75, 1]\).
    \textbf{Right.} The price gap over time for \(\lambda \in [0.25, 0.5, 0.75, 1]\). For the middle and right panels, we used the historical ETH price from May 2025 to October 2025. All of the pools were initialized with \(1000\) ETH and an equivalent amount of USDT.
  }
  \label{fig:lvr_price_gap_trade_off_timeseries}
\end{figure}

%% file: optimization.tex
\section{Finding Optimal \texorpdfstring{$\lambda$}{Lambda}}\label{sec:opt}

As we saw in \cref{sec:pa_amm}, PA-AMM introduces a fundamental trade-off: smaller activeness \(\lambda\) reduces loss to arbitrageurs (loss-versus-rebalancing, LVR), but increases the typical price gap between the pool's marginal price and the true price. For G3Ms, this trade-off admits a natural portfolio interpretation. A G3M with weights \((\theta,1-\theta)\) can be viewed as a constant-weight portfolio that aims to keep the risky-asset value share as \(\theta\). Liquidity providers (LPs) therefore face two sources of dissatisfaction: (i) a deviations of the pool's post-arbitrage portfolio weights from their target (tracking error),\footnote{Note that the arbitrageurs always come first at each block, thus what investors (LPs) face is the post-arbitrage portfolio.} and (ii) the cost of maintaining the target weights, which is the value transferred to arbitrageurs (LVR) in frictionless environment.

In this section, we formalize this trade-off as a discounted infinite-horizon stochastic control problem, where \(\{\lambda_n\}_{n\in\mathbb N}\) is the control and the (linearized) gap process is the state. We then derive a closed-form expression for the small-\(\Delta t\) optimal activeness \(\lambda^\ast\).

\subsection{The Problem}

We start by defining the tracking error. As we mentioned above, a two-asset G3M with weights \((\theta,1-\theta)\) can be interpreted as a constant-weight portfolio; when the pool's marginal price coincides with the true price, the value share invested in the risky asset \(X\) equals \(\theta\). Accordingly, we define the (post-arbitrage) risky-asset weight at block \(n\) by \[
  w_n \;\coloneqq\; \frac{S_n\,x_n}{S_n\,x_n + y_n},
\] where \((x_n,y_n)=R_n\) denotes the total reserves \emph{after} arbitrage at block \(n\), and \(S_n=e^{s_n}\) is the true price. We take the tracking error to be the squared deviation from the target weight: \[
  \mathrm{TE}_n \;\coloneqq\; (w_n-\theta)^2.
\] The following lemma connects this definition to the (post-arbitrage) price gap.

\begin{lem}
  \label{lem:te_gap_g3m}
  Let \(\varphi(x,y)=x^\theta y^{1-\theta}\) with \(\theta\in(0,1)\). For
  \( g_n^\mathrm{bot} \), the post-arbitrage log-price gap at block \(n\), as \(g_n^\mathrm{bot}\to 0\),
  \begin{equation}
    w_n-\theta \;=\; \theta(1-\theta)\,g_n^\mathrm{bot} \;+\; O((g_n^\mathrm{bot})^2),
    \qquad
    \mathrm{TE}_n \;=\; \theta^2(1-\theta)^2\,(g_n^\mathrm{bot})^2 \;+\; O(|g_n^\mathrm{bot}|^3).
    \label{eq:te_gap_relation}
  \end{equation}
  Moreover, \(g_n^\mathrm{bot} = \Psi(g_n)\) where \(g_n=s_n-p_{n-1}\) is the pre-arbitrage (top-of-the-block) gap and
  \(\Psi(g)=(1-\lambda_n)g+O(g^2)\). Hence, as \(g_n\to 0\),
  \begin{equation}
    \mathrm{TE}_n
    \;=\; \theta^2(1-\theta)^2\,(1-\lambda_n)^2 g_n^2 \;+\; O(|g_n|^3).
    \label{eq:te_top_gap}
  \end{equation}
\end{lem}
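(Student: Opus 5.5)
The plan is to reduce everything to explicit closed forms, which are available because \(\varphi\) is a G3M, and then Taylor expand. There are two separate computations. The first expresses the post-arbitrage weight \(w_n\) as an exact function of \(g_n^\mathrm{bot}\). The second expresses \(g_n^\mathrm{bot}\) as an exact function of the top-of-block gap \(g_n\) through the one-block update of \Cref{alg:pa_amm_logic}. Composing the two expansions then gives \eqref{eq:te_top_gap}.

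\textbf{Step 1 (weight as a function of the bottom gap).} By definition \(p_n=\marginp(R_n)\), so the total reserves after arbitrage are \(R_n=R(L_n,p_n)\). I will use the formulas of \Cref{example:g3m_definition} for \(x(L,p)\) and \(y(L,p)\). With \(S_n=e^{s_n}\) they give
\[
\frac{S_n x_n}{y_n}=\frac{\theta}{1-\theta}\,e^{s_n-p_n}=\frac{\theta}{1-\theta}\,e^{g_n^\mathrm{bot}}.
\]
The liquidity \(L_n\) cancels, so
\[
w_n=\frac{\theta e^{g}}{\theta e^{g}+1-\theta}, \qquad g=g_n^\mathrm{bot}.
\]
This is a logistic function of \(g\) shifted by \(\log\frac{\theta}{1-\theta}\). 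It equals \(\theta\) at \(g=0\), and its derivative there is \(w(1-w)|_{g=0}=\theta(1-\theta)\). It is real-analytic, so Taylor's theorem gives the first identity in \eqref{eq:te_gap_relation}. Squaring gives the statement for \(\mathrm{TE}_n\), with a cubic remainder.

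\textbf{Step 2 (bottom gap as a function of the top gap).} At the top of block \(n\), the active reserves are \(\lambda_n R(L_{n-1},p_{n-1})=R(\lambda_n L_{n-1},p_{n-1})\), using 1-homogeneity. The passive reserves are \((1-\lambda_n)R(L_{n-1},p_{n-1})\). With zero fee and frictionless arbitrage, the arbitrageurs move the active marginal price to \(s_n\) while keeping the active liquidity at \(\lambda_n L_{n-1}\). The active reserves therefore become \(R(\lambda_n L_{n-1},s_n)\).

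Summing the active and passive parts and using \(\marginP=\frac{\theta}{1-\theta}\frac{y}{x}\), the constants \(c_x,c_y\) in \(x(L,p)=Lc_xe^{-(1-\theta)p}\) and \(y(L,p)=Lc_ye^{\theta p}\) cancel. This gives
\[
e^{p_n}=\frac{\lambda_n e^{\theta s_n}+(1-\lambda_n)e^{\theta p_{n-1}}}{\lambda_n e^{-(1-\theta)s_n}+(1-\lambda_n)e^{-(1-\theta)p_{n-1}}}.
\]
Dividing through and setting \(g=s_n-p_{n-1}\) yields
\[
g_n^\mathrm{bot}=\Psi(g_n),\qquad
\Psi(g)\coloneqq\log\bigl(\lambda_n+(1-\lambda_n)e^{(1-\theta)g}\bigr)-\log\bigl(\lambda_n+(1-\lambda_n)e^{-\theta g}\bigr).
\]
This map is smooth, satisfies \(\Psi(0)=0\), and has
\[
\Psi'(0)=(1-\lambda_n)(1-\theta)+(1-\lambda_n)\theta=1-\lambda_n .
\]
Hence \(\Psi(g)=(1-\lambda_n)g+O(g^2)\).

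\textbf{Step 3 (composition).} Substitute \(g_n^\mathrm{bot}=\Psi(g_n)=O(g_n)\) into \eqref{eq:te_gap_relation}. Then \((g_n^\mathrm{bot})^2=(1-\lambda_n)^2g_n^2+O(|g_n|^3)\), and \(O(|g_n^\mathrm{bot}|^3)=O(|g_n|^3)\), which gives \eqref{eq:te_top_gap}.

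The only delicate point I anticipate is Step 2. It requires justifying that the arbitrage outcome is exactly \(R(\lambda_nL_{n-1},s_n)\), that is, that competitive zero-fee arbitrage against the active reserves alone drives the active marginal price precisely to \(s_n\) at unchanged active liquidity. It also requires keeping the remainder constants uniform. For the latter, I would restrict to \(|g|\le\delta\), as in \Cref{cor:linearized_gap_valid}, and note that all the \(O(\cdot)\) constants then depend only on \(\delta\), \(\theta\) and \(\lambda_n\).
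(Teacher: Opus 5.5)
Your proposal is correct and is essentially the paper's proof. Step~1 is the same computation \(w_n=\bigl(1+\tfrac{1-\theta}{\theta}e^{-g_n^{\mathrm{bot}}}\bigr)^{-1}\) with slope \(\theta(1-\theta)\) at \(0\), and for Step~2 the paper simply invokes \Cref{cor:linearized_gap_valid}, whose one-block map (written out in the proof of \Cref{prop:gap_stationary} as \(g-\log(1-\lambda+\lambda e^{\theta g})+\log(1-\lambda+\lambda e^{-(1-\theta)g})\), which also gives \(|\Psi(g)-(1-\lambda)g|\le g^2/8\) uniformly in \(\lambda\), so your \(|g|\le\delta\) restriction is unnecessary) is algebraically identical to your \(\Psi\).
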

\begin{proof}
  See \cref{app:proof_te_gap_g3m}.
\end{proof}

By \Cref{lem:te_gap_g3m,cor:lvr_g3m}, both \(\mathrm{TE}_n\) and \(\widetilde{\LVR}_n\) admit quadratic-leading expansions in the gap. To leading order, we have \[
  \mathrm{TE}_n \approx \theta^2 (1 - \theta)^2 (1-\lambda_n)^2 g_n^2, \qquad \widetilde{\LVR}_n \approx \frac{\lambda_n}{2} \theta (1 - \theta) g_n^2,
\] and for the portfolio manager who weights the LVR cost \(\gamma'\) times more than the tracking error, the one-stage cost becomes \[
  \TE_n + \gamma' \widetilde{\LVR}_n \propto ((1 - \lambda_n)^2 + \gamma \lambda_n)g_n^2 \, ,
\] where \(\gamma \coloneqq \frac{\gamma'}{2\theta(1 - \theta)} \) is the constant that absorbs the fixed factors for notational convenience. We define the objective functional as a weighted sum of one-stage costs discounted at a rate \(\varrho>0\),
\begin{equation}
  \mathcal J(\{\lambda_n\}) \;\coloneqq\; \E\!\left[\sum_{n=1}^{\infty} e^{-\varrho n \Delta t}\big(\mathrm{TE}_n + \gamma' \,\widetilde{\LVR}_n\big)\right] \, .
  \label{eq:J_def}
\end{equation}
Since multiplicative constants do not affect the optimizer of \(\{\lambda_n\}\), we work with the leading-order approximate objective
\begin{equation}
  \mathcal J(\{\lambda_n\})
  \;\approx\;
  \E\!\left[\sum_{n=1}^{\infty} e^{-\varrho n \Delta t}\Big((1-\lambda_n)^2 + \gamma \lambda_n\Big) g_n^2\right].
\end{equation}

We further assume that there exists a strictly positive lower bound \(\underline{\lambda}\) on possible \(\lambda_n\) to ensure the finiteness of the moments of \(g_n\) and the validity of its linearized update rule, which justifies the solution introduced below under a small \(\Delta t\) regime.\footnote{We also remark that \cite{evans2021optimal} defined a loss functional of almost the same form to find the optimal fee rate. The main difference is that here we optimize for the activeness \(\lambda\) while fixing the fee rate to \(0\).} The final optimization problem we solve is then:
\begin{equation}
  \begin{aligned}
    \min_{\{\lambda_n\}_{n\in\mathbb N}} \quad & \E\!\left[\sum_{n=1}^{\infty} e^{-\varrho n \Delta t} \Big( (1-\lambda_n)^2 + \gamma \lambda_n \Big) g_n^2 \right] \\
    \textrm{subject to} \quad & g_{n+1} = (1-\lambda_n)g_n + \varepsilon_{n+1}, \quad \varepsilon_{n+1} \sim \mathcal N (\mu \Delta t, \sigma^2 \Delta t) \quad \text{i.i.d.,} \\
    & \lambda_n \in [\underline{\lambda}, 1].
  \end{aligned}
\end{equation}

\subsection{Solution}

The solution to our problem always and uniquely exists within \([\underline\lambda,1]\), for small enough \(\Delta t\), i.e., a fast enough block production regime.

\begin{thm}
  \label{thm:lambda_star}
  The optimal policy (for \(g\neq 0\)) is of feedback form
  \begin{equation}
    \lambda^{\mathrm{opt}}(g)
    \;=\;
    \mathrm{clip}\left(
      1-\frac{\gamma}{2(1+\beta v_2)}
      + \frac{\beta(2v_2\mu\Delta t+v_1)}{2(1+\beta v_2)}\cdot\frac{1}{g},
      \underline \lambda,
      1
    \right),
    \label{eq:lambda_feedback}
  \end{equation}
  where \(\mathrm{clip}(x,l,u) \coloneqq \min(\max(x,l),u)\), \(\beta \coloneqq e^{-\varrho \Delta t}\), and \((v_2,v_1,v_0)\) are the coefficients of the quadratic value function \(V(g)=v_2 g^2+v_1 g+v_0\) uniquely solving the Bellman equation. Moreover, as \(\Delta t\to 0\), the state-dependent term vanishes under the stationary scaling \(g=O_p(\sqrt{\Delta t})\), and the optimal activeness converges to the constant
  \begin{equation}
    \lambda^\ast(\gamma) \coloneqq \frac{1+\sqrt{1+2\gamma}}{1+\gamma+\sqrt{1+2\gamma}},
    \label{eq:lambda_star_limit}
  \end{equation}
  i.e., \( \lambda^{\mathrm{opt}}(g)=\lambda^\ast(\gamma)+O_p(\sqrt{\Delta t})\).
\end{thm}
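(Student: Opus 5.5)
The plan is to solve the discounted linear--quadratic control problem by dynamic programming with a quadratic value-function ansatz. I will then verify the ansatz and pass to the limit \(\Delta t\to 0\) in the resulting scalar fixed-point equation. Write \(u \coloneqq 1-\lambda\), so the dynamics are \(g_{n+1}=u g_n+\varepsilon_{n+1}\) and the stage cost is \((u^2-\gamma u+\gamma)g^2\).

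\emph{Step 1: Bellman equation and the feedback rule.} Posit \(V(g)=v_2g^2+v_1g+v_0\) with \(v_2>0\). Since \(\E[\varepsilon]=\mu\Delta t\) and \(\E[\varepsilon^2]=\sigma^2\Delta t+\mu^2\Delta t^2\), we get
\[
\E V(ug+\varepsilon)=v_2\big(u^2g^2+2ug\mu\Delta t+\E[\varepsilon^2]\big)+v_1(ug+\mu\Delta t)+v_0 .
\]
Hence the right-hand side of the Bellman equation \(V(g)=\min_u\{(u^2-\gamma u+\gamma)g^2+\beta\E V(ug+\varepsilon)\}\) is, for fixed \(g\neq0\), a strictly convex quadratic in \(u\) with leading coefficient \((1+\beta v_2)g^2\). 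The first-order condition
\[
2(1+\beta v_2)g^2u-\gamma g^2+\beta(2v_2\mu\Delta t+v_1)g=0
\]
gives \(u^\ast(g)=\frac{\gamma}{2(1+\beta v_2)}-\frac{\beta(2v_2\mu\Delta t+v_1)}{2(1+\beta v_2)}\cdot\frac1g\). Setting \(\lambda=1-u^\ast\) yields exactly the unclipped expression in \eqref{eq:lambda_feedback}. By convexity in \(u\), the constrained minimizer over \(\lambda\in[\underline\lambda,1]\) is the projection onto the interval, which is the clip.

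\emph{Step 2: Coefficient equations.} Substitute \(u^\ast\) back in and match powers of \(g\), ignoring the clip for the moment. The \(g^2\) coefficient gives the scalar fixed-point equation
\[
v_2=T_\beta(v_2)\coloneqq\gamma-\frac{\gamma^2}{4(1+\beta v_2)} .
\]
The map \(T_\beta\) is increasing and concave on \([0,\infty)\), with \(T_\beta(0)=\gamma-\gamma^2/4\) and bounded range, so it has a unique fixed point in the relevant region. Its slope there is \(\beta(u^\ast_{\mathrm{lead}})^2<1\), which gives uniqueness and stability. The \(g^1\) coefficient is linear in \(v_1\), of the form \(v_1(1-\beta u^\ast_{\mathrm{lead}})=2\beta v_2u^\ast_{\mathrm{lead}}\mu\Delta t+(\text{terms in }v_1^2,\,\mu\Delta t\text{ of higher order})\). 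Consequently \(v_1=O(\Delta t)\). Finally \(v_0\) is determined by the constant term together with \(\beta<1\).

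\emph{Step 3: Limit \(\Delta t\to0\).} As \(\Delta t\to0\) we have \(\beta=e^{-\varrho\Delta t}\to1\). The fixed point of \(T_\beta\) depends continuously on \(\beta\), with an \(O(\Delta t)\) perturbation, and converges to the positive root of \(4v_2^2+4(1-\gamma)v_2+\gamma^2-4\gamma=0\), namely \(v_2=\tfrac12(\gamma-1+\sqrt{1+2\gamma})\). Then \(1+v_2=\tfrac12(1+\gamma+\sqrt{1+2\gamma})\), so
\[
1-\frac{\gamma}{2(1+v_2)}=\frac{1+\sqrt{1+2\gamma}}{1+\gamma+\sqrt{1+2\gamma}}=\lambda^\ast(\gamma).
\]
For the state-dependent term, the numerator \(\beta(2v_2\mu\Delta t+v_1)\) is \(O(\Delta t)\), while \(g=O_p(\sqrt{\Delta t})\) under the stationary scaling of \Cref{prop:gap_stationary}, which applies since \(\lambda\ge\underline\lambda>0\). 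Hence the ratio \((2v_2\mu\Delta t+v_1)/g\) is \(O_p(\sqrt{\Delta t})\), which gives \(\lambda^{\mathrm{opt}}=\lambda^\ast(\gamma)+O_p(\sqrt{\Delta t})\).

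\emph{Main obstacle: the clip.} When the clip binds, the true value function is only piecewise quadratic, so the exact-quadratic ansatz is not literally correct. I would first assume \(\underline\lambda<\lambda^\ast(\gamma)\), which holds since \(\lambda^\ast(\gamma)\in(0,1]\). With this assumption the clip is active only on the set \(\{|g|\lesssim \Delta t/\text{const}\}\) of small \(|g|\), and that set has vanishing probability and contributes only \(O(\Delta t^{2})\) to the cost. I would then run a verification argument: the quadratic \(V\) satisfies the Bellman equation exactly off this set and up to a negligible error on it. Combined with the discounting \(\beta<1\) and the uniform second-moment bounds guaranteed by \(\underline\lambda>0\) (so the Bellman operator is a contraction on functions of quadratic growth), this identifies \(V\) as the value function to leading order and \(\lambda^{\mathrm{opt}}\) as optimal. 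Justifying the use of the linearized dynamics in place of the exact ones rests on \Cref{cor:linearized_gap_valid}.
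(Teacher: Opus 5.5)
Your proposal follows the paper's proof essentially step for step: the substitution $u=1-\lambda$, the quadratic ansatz, the first-order condition plus projection giving the clip, the scalar Riccati equation for $v_2$, the linear equation giving $v_1=O(\Delta t)$, and then $\beta\to1$. Two of your additions improve on the paper: the stability criterion $\beta(u^\ast_{\mathrm{lead}})^2<1$ for selecting the larger root (where the paper's ``take the positive root'' is ambiguous for $\gamma>4$), and your observation that the clip binds near $g=0$ when $\mu\neq0$, so the value function is only approximately quadratic (where the paper clips silently).

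Two small corrections. First, $\underline\lambda<\lambda^\ast(\gamma)$ does not follow from $\lambda^\ast(\gamma)\in(0,1]$, since $\lambda^\ast(\gamma)\sim\sqrt{2/\gamma}\to0$; it must be imposed as a hypothesis, which the limit statement tacitly needs anyway. Second, for $\gamma>4$ one has $T_\beta(0)<0$, so existence of a real fixed point of $T_\beta$ does not follow from monotonicity and boundedness alone. It comes from the discriminant $(1+\beta\gamma)^2-\beta\gamma^2$ being positive for $\beta$ near $1$, that is, for $\Delta t$ small.
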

\begin{proof}
  See \cref{app:proof_lambda_star}.
\end{proof}

\begin{figure}[t]
  \centering
  \includegraphics[width=0.9\textwidth]{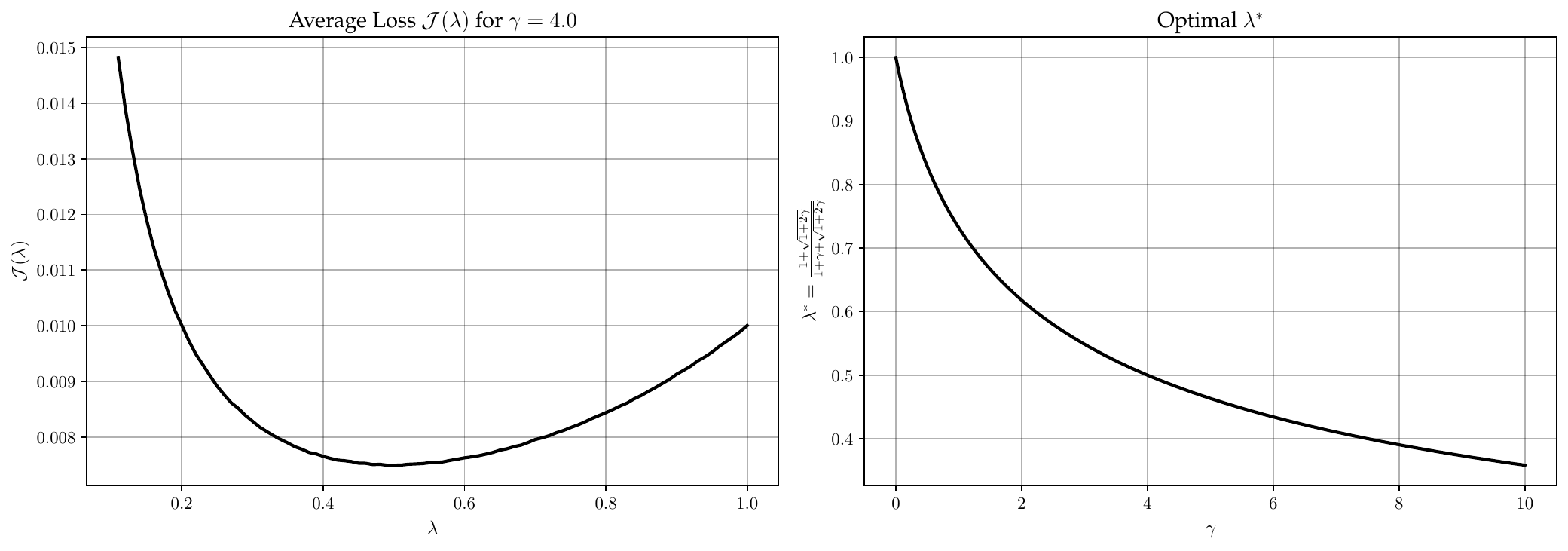}
  \caption{(Left) Leading-order loss \(\lambda\mapsto \E[((1-\lambda)^2+\gamma\lambda)g^2]\) for \(\gamma=4\) under the stationary
  AR(1) approximation. (Right) The small-\(\Delta t\) asymptotic optimizer \(\lambda^\ast(\gamma)\) for \(\gamma\in[0,10]\).}

  \label{fig:optimization_combined}
\end{figure}

%% file: conclusion.tex
\section{Conclusion}\label{sec:conc}
In this work, we introduced a new class of AMM, named PA-AMM, that partitions liquidity into two types, active and passive, and uses only the active portion for trading, thereby limiting the available liquidity within a single block. Such a limit effectively works by gradually executing the rebalancing trade over time, eventually reducing LVR and improving LPs' wealth. As a trade-off, the pool's marginal price may deviate from the true market price. We defined and solved the optimization problem that balances this trade-off.

However, in this work, we abstracted away the gas cost and set the trading fee to 0. Future work is needed to extend the model to a more realistic setting. In that setting, the optimization problem would need to be adjusted to find an optimal combination of fee rate and activeness. Also, in our work, the relative weight of the rebalancing cost, \(\gamma\), was held constant. One may consider a variant in which \(\gamma\) varies according to market conditions, such as volatility. Adding the order flow from noise traders, which would likely increase with \(\lambda\), to the optimization problem would also be an interesting direction.

We conclude with a remark on potential applications. Given that the loss incurred by LPs to arbitrageurs for each block is a \(\lambda\) share of the maximum possible loss, PA-AMM may be a better option for liquidity provision to prediction markets, where LPs face much larger adverse selection costs. Consider a CFMM with two underlying tokens, YES and NO, for a given topic. In this market, LPs' wealth could go to zero within a single block, since one of the tokens will become \(0\) in value after the arrival of informed traders (arbitrageurs). That is, no matter how small the difference, if LPs' withdrawals occur after arbitrageurs' taker orders, LPs will lose all their wealth. This is not the case for PA-AMM, where LPs lose only a \(\lambda\) portion of their wealth per block; thus, the lower the latency, the less they lose, making the loss a continuous function of latency, which could potentially incentivize the LPs to provide more liquidity.

%% file: appendix.tex
\appendix
\section{Proofs}
\subsection{Proof of \texorpdfstring{\Cref{prop:gap_stationary}}{Proposition 2}}\label{app:proof_gap_stationary}
\begin{proof}
  Recall that the underlying CFMM invariant is \(\varphi(x,y)=x^\theta y^{1-\theta}\) with \(\theta\in(0,1)\). Fix \(\lambda\in(0,1]\) and let \(\varepsilon_n\sim \mathcal N(\mu\Delta t,\sigma^2\Delta t)\) i.i.d. For G3M, one checks that \[
    \marginp \big( (1-\lambda) R(1,p)+\lambda R(1,p+g)\big) = p + \log\!\big(1-\lambda+\lambda e^{\theta g}\big) - \log\!\big(1-\lambda+\lambda e^{-(1-\theta)g}\big).
  \] Hence, the exact one-step recursion of the top-gap is:
  \begin{equation}\label{eq:g3m-rec}
    g_{n+1} = \varepsilon_{n+1} + \Psi(g_n), \qquad \Psi(g)\coloneqq g -\Big(\log(1-\lambda+\lambda e^{\theta g}) - \log(1-\lambda+\lambda e^{-(1-\theta)g})\Big).
  \end{equation}
  In particular, \(\Psi\) does not depend on \(p\). We first compute \(\Psi'(g)\). Differentiating,
  \begin{equation}\label{eq:psi-prime}
    \Psi'(g) = 1-\frac{\lambda\theta e^{\theta g}}{1-\lambda+\lambda e^{\theta g}} -\frac{\lambda(1-\theta)e^{-(1-\theta)g}}{1-\lambda+\lambda e^{-(1-\theta)g}}.
  \end{equation}
  Since each fraction of \eqref{eq:psi-prime} satiesfies
  \begin{align*}
    \frac{\lambda\theta e^{\theta g}}{1-\lambda+\lambda e^{\theta g}} &\geq \lambda\theta \quad \text{if } g \geq 0, \\
    \frac{\lambda(1-\theta)e^{-(1-\theta)g}}{1-\lambda+\lambda e^{-(1-\theta)g}} &\geq \lambda(1-\theta) \quad \text{if } g \leq 0,
  \end{align*}
  while always lie in \(\left[0, \theta\right]\) and \(\left[0, 1-\theta\right]\), respectively, we have \[
    0 \leq \Psi'(g) \leq 1 - \lambda\min\{\theta,(1-\theta)\} \eqqcolon\rho \in (0,1).
  \] Combining \(\Psi(0)=0\) and the mean value theorem one gets
  \begin{equation}\label{eq:global-contraction}
    |\Psi(g)|\le \rho |g|,\qquad \forall g\in\R,
  \end{equation}
  and more generally \(|\Psi(g)-\Psi(h)|\le \rho |g-h|\) for all \(g,h\). Define the random affine map \(F_{\varepsilon}(x)\coloneqq \Psi(x)+\varepsilon\). Then \eqref{eq:g3m-rec} is \(g_{n+1}=F_{\varepsilon_{n+1}}(g_n)\). Couple two versions \(\{g_n\}\), \(\{g'_n\}\) by using the same innovations \(\varepsilon_n\). Then by the Lipschitz property of \(\Psi\), \[
    |g_{n+1}-g'_{n+1}| = |\Psi(g_n)-\Psi(g'_n)| \le \rho |g_n-g'_n| \le \rho^{n}|g_1-g'_1|.
  \] Therefore, the coupled distance contracts exponentially almost surely. Since \(\Psi\) is globally Lipschitz with constant \(\rho < 1\), the Markov chain is a contractive iterated random function, and it admits a unique invariant distribution and converges to it geometrically.

  For the second part, let \(g\sim \pi_{\Delta t}\) and let \(\varepsilon\) be an independent copy of \(\varepsilon_{n+1}\). Then \(g\overset{d}{=}\Psi(g)+\varepsilon\).
  Taking absolute values and using \eqref{eq:global-contraction}, \[
    |g|\le |\Psi(g)|+|\varepsilon|\le \rho|g|+|\varepsilon|,
  \] hence \(|g|\le (1-\rho)^{-1}|\varepsilon|\) in \(L^m\) for any \(m\ge 1\). More precisely, iterating as in a geometric series yields \[
    \|g\|_m \le \frac{1}{1-\rho}\,\|\varepsilon\|_m, \qquad m\ge 1.
  \] Since \(\varepsilon\sim \mathcal N(\mu\Delta t,\sigma^2\Delta t)\), we have \(\|\varepsilon\|_m=O(\sqrt{\Delta t})\), so in particular
  \begin{equation}\label{eq:moments}
    \E[g^2]=O(\Delta t),\qquad \E[|g|^3]=O(\Delta t^{3/2}),\qquad \E[g^4]=O(\Delta t^2).
  \end{equation}
  A direct computation gives
  \begin{align}
    \Psi''(g) &= -\,\frac{\partial^2}{\partial g^2}\Big(\log(1 - \lambda + \lambda e^{\theta g})-\log(1 - \lambda + \lambda e^{-(1-\theta)g})\Big) \\
    &= -\lambda(1-\lambda)\!\left[\frac{\theta^2 e^{\theta g}}{(1 - \lambda + \lambda e^{\theta g})^2} -\frac{(1-\theta)^2 e^{-(1-\theta)g}}{(1 - \lambda + \lambda e^{-(1-\theta)g})^2}\right].
  \end{align}
  Using \(\sup_{z>0} \frac{z}{(A+Bz)^2}=\frac{1}{4AB}\), we get the global bound \[
    \sup_{g\in\R}|\Psi''(g)| \le \lambda(1-\lambda)\left(\theta^2\frac{1}{4\lambda(1-\lambda)}+\frac{(1-\theta)^2}{4\lambda(1-\lambda)}\right) = \frac{\theta^2+(1-\theta)^2}{4} \le \frac14.
  \] Therefore, using Taylor's theorem, one gets
  \begin{equation}\label{eq:psi-expand}
    \Psi(g)= (1-\lambda) g + r(g), \qquad |r(g)|\le \frac12 \sup_{u\in\R}|\Psi''(u)|\, g^2 \le \frac{1}{8}g^2, \quad \forall g\in\R.
  \end{equation}
  Using stationarity \(g\overset{d}{=}\Psi(g)+\varepsilon\) with \(g\perp\!\!\!\perp \varepsilon\), \[\E[g^2]=\E[\Psi(g)^2]+\E[\varepsilon^2]+2\,\E[\Psi(g)]\,\E[\varepsilon].\] Write \(\Psi(g)= (1-\lambda) g + r(g)\) as in \eqref{eq:psi-expand}. Expanding and rearranging, we get
  \begin{equation}\label{eq:variance-eq}
    \lambda(2-\lambda)\E[g^2] = \E[\varepsilon^2]+2(1-\lambda)\E[g]\,\E[\varepsilon]+2\,\E[r(g)]\,\E[\varepsilon]+2(1-\lambda)\E[g\,r(g)]+\E[r(g)^2].
  \end{equation}
  Now we bound the error terms. First, \(\E[\varepsilon]=\mu\Delta t\) and \(\E[\varepsilon^2]=\sigma^2\Delta t+\mu^2\Delta t^2\). Second, taking expectations in \(g=\Psi(g)+\varepsilon\) gives \[
    \E[g]=(1-\lambda)\E[g]+\E[r(g)]+\E[\varepsilon] \quad\Longrightarrow\quad \lambda\,\E[g]=\E[r(g)]+\mu\Delta t.
  \] Using \(|r(g)|\le \frac18 g^2\) and \(\E[g^2]=O(\Delta t)\) from \eqref{eq:moments}, we get \(\E[r(g)]=O(\Delta t)\), hence \(\E[g]=O(\Delta t)\). Therefore \(\E[g]\E[\varepsilon]=O(\Delta t^2)\) and \(\E[r(g)]\E[\varepsilon]=O(\Delta t^2)\). Next, by \(|r(g)|\le \frac18 g^2\) and \eqref{eq:moments}, \[
    |\E[g\,r(g)]|\le \E[|g|\,|r(g)|]\le \frac18 \E[|g|^3]=O(\Delta t^{3/2}),
  \] and \[
    \E[r(g)^2]\le \frac{1}{64}\E[g^4]=O(\Delta t^2).
  \] Plugging these into \eqref{eq:variance-eq} yields \[
    \lambda(2-\lambda)\E[g^2]=\sigma^2\Delta t + O(\Delta t^{3/2}).
  \] Thus \[
    \E[g^2]=\frac{\sigma^2\Delta t}{\lambda(2-\lambda)} + O(\Delta t^{3/2}),
  \] which is the desired leading-order stationary second moment expansion.
\end{proof}

\subsection{Proof of \texorpdfstring{\cref{lem:te_gap_g3m}}{Lemma 1}}\label{app:proof_te_gap_g3m}
\begin{proof}
  Recall that for G3M the marginal price \(\marginP=\frac{\theta}{1-\theta}\frac{y}{x}\). At the post-arbitrage reserves \((x_n,y_n)\), we have \(y_n = \frac{1-\theta}{\theta}P_n x_n\) where \(P_n=e^{p_n}\). Therefore \[
    w_n = \frac{S_n x_n}{S_n x_n + y_n} = \frac{S_n x_n}{S_n x_n + \frac{1-\theta}{\theta}P_n x_n} = \frac{1}{1 + \frac{1-\theta}{\theta}\frac{P_n}{S_n}} = \frac{1}{1 + \frac{1-\theta}{\theta}e^{-(s_n-p_n)}} = f(g_n^\mathrm{bot}),
  \] where \(f(u) \coloneqq \big(1 + \frac{1-\theta}{\theta}e^{-u}\big)^{-1}\). One checks \(f(0)=\theta\) and \[
    f'(u)=\frac{\frac{1-\theta}{\theta}e^{-u}}{\big(1+\frac{1-\theta}{\theta}e^{-u}\big)^2}, \qquad f'(0)=\frac{\frac{1-\theta}{\theta}}{(1+\frac{1-\theta}{\theta})^2}=\theta(1-\theta).
  \] Hence \(w_n-\theta = f' (0)\, g_n^\mathrm{bot} + O((g_n^\mathrm{bot})^2)\), giving \eqref{eq:te_gap_relation}. The second statement comes directly from \Cref{cor:linearized_gap_valid}.
\end{proof}

\subsection{Proof of \texorpdfstring{\cref{thm:lambda_star}}{Theorem 1}}\label{app:proof_lambda_star}
\begin{proof}
  Let \(u\coloneqq 1-\lambda\in[0,1-\underline\lambda]\). The dynamics becomes \(g_{n+1}=u_n g_n+\varepsilon_{n+1}\), and the one-stage cost is \[
    L(g,u)=\big(u^2+\gamma(1-u)\big)g^2 = (u^2-\gamma u+\gamma)g^2.
  \] The Bellman equation is then \[
    V(g)=\min_{u\in[0,1-\underline\lambda]}\Big\{ (u^2-\gamma u+\gamma)g^2 + \beta\,\E\big[V(ug+\varepsilon)\big]\Big\}.
  \] We adopt the quadratic ansatz \(V(g)=v_2 g^2+v_1 g+v_0\) with \(v_2>0\). Write \(m\coloneqq\E[\varepsilon]=\mu\Delta t\) and \(s_2\coloneqq\E[\varepsilon^2]=\sigma^2\Delta t+\mu^2\Delta t^2\). Then \[
    \E[(ug+\varepsilon)^2]=u^2 g^2 + 2u g\,m + s_2,\qquad \E[ug+\varepsilon]=ug+m,
  \] and therefore
  \begin{align*}
    \E[V(ug+\varepsilon)]
    &= v_2\,\E[(ug+\varepsilon)^2] + v_1\,\E[ug+\varepsilon] + v_0 \\
    &= v_2(u^2 g^2 + 2u g m + s_2) + v_1(ug+m) + v_0.
  \end{align*}
  Substituting into the Bellman objective yields
  \begin{align*}
    Q(u;g)
    &= (u^2-\gamma u+\gamma)g^2 + \beta\Big(v_2(u^2 g^2 + 2u g m + s_2) + v_1(ug+m) + v_0\Big) \\
    &= (1+\beta v_2)g^2 u^2\;+\; \Big(-\gamma g^2 + \beta(2v_2 m+v_1)g\Big)u \;+\; \gamma g^2 \;+\; \beta(v_2 s_2+v_1 m+v_0).
  \end{align*}
  For each fixed \(g\neq 0\), this is a strictly convex quadratic function of \(u\) since \(1+\beta v_2>0\). Thus the unconstrained minimizer satisfies \(\partial_u Q(u;g)=0\), i.e., \[
    2(1+\beta v_2) g^2 u + \Big(-\gamma g^2 + \beta(2v_2 m+v_1)g\Big)=0,
  \] and \[
    u^\mathrm{opt}(g)=\frac{\gamma}{2(1+\beta v_2)} - \frac{\beta(2v_2 m+v_1)}{2(1+\beta v_2)}\cdot\frac{1}{g}.
  \] Using \(\lambda=1-u\) and enforcing the admissible interval \(\lambda\in[\underline\lambda,1]\) yields the clipped feedback form \eqref{eq:lambda_feedback}. When \(g=0\), the one-stage cost is zero for any \(\lambda\), and we set \(\lambda^{\mathrm{opt}}(0)\) by continuity.

  To identify \((v_2,v_1,v_0)\), we use the identity \(\min_{u}(A u^2 + B u)= -B^2/(4A)\) when \(A>0\). Here \(A=(1+\beta v_2)g^2\) and \(B=-\gamma g^2+\beta(2v_2 m+v_1)g\), so \[
    \min_{u} Q(u;g) = \gamma g^2 + \beta(v_2 s_2+v_1 m+v_0) - \frac{\big(-\gamma g^2+\beta(2v_2 m+v_1)g\big)^2}{4(1+\beta v_2)g^2}.
  \] Expand the square in the last term: \[
    \big(-\gamma g^2+\beta(2v_2 m+v_1)g\big)^2 = \gamma^2 g^4 - 2\gamma\beta(2v_2 m+v_1)g^3 + \beta^2(2v_2 m+v_1)^2 g^2.
  \] Dividing by \(g^2\) gives \[
    \frac{\big(-\gamma g^2+\beta(2v_2 m+v_1)g\big)^2}{g^2} = \gamma^2 g^2 - 2\gamma\beta(2v_2 m+v_1)g + \beta^2(2v_2 m+v_1)^2.
  \] Therefore the Bellman equation \(V(g)=v_2 g^2+v_1 g+v_0=\min_u Q(u;g)\) implies coefficient matching:
  \begin{align}
    v_2 &= \gamma - \frac{\gamma^2}{4(1+\beta v_2)}, \label{eq:riccati_v2}\\
    v_1 &= \frac{\gamma\beta(2v_2 m+v_1)}{2(1+\beta v_2)}, \label{eq:v1_eq}\\
    v_0 &= \beta(v_2 s_2+v_1 m+v_0) - \frac{\beta^2(2v_2 m+v_1)^2}{4(1+\beta v_2)}. \label{eq:v0_eq}
  \end{align}
  Equation \eqref{eq:riccati_v2} is a scalar Riccati equation. Rearranging, \[
    v_2(1+\beta v_2)=\gamma(1+\beta v_2)-\frac{\gamma^2}{4} \quad\Longleftrightarrow\quad \beta v_2^2 + (1-\beta\gamma)v_2 + \Big(\frac{\gamma^2}{4}-\gamma\Big)=0,
  \] and taking the positive root yields \[
    v_2=\frac{-(1-\beta\gamma)+\sqrt{(1-\beta\gamma)^2-\beta(\gamma^2-4\gamma)}}{2\beta}.
  \] Next, solving \eqref{eq:v1_eq} gives \[
    v_1\Big(2(1+\beta v_2)-\gamma\beta\Big)=2\gamma\beta v_2 m \quad\Longrightarrow\quad v_1=\frac{2\gamma\beta v_2\,\mu\Delta t}{2+2\beta v_2-\gamma\beta}.
  \] Finally, \eqref{eq:v0_eq} yields the explicit expression for \(v_0\).

  For the small-\(\Delta t\) limit, note that \(\beta=e^{-\varrho\Delta t}=1+O(\Delta t)\) and \(m=\mu\Delta t\), so \(v_1=O(\Delta t)\). Under the stationary scaling \(g=O_p(\sqrt{\Delta t})\), the state-dependent term in \eqref{eq:lambda_feedback} satisfies \[
    \frac{\beta(2v_2\mu\Delta t+v_1)}{g} = O_p(\sqrt{\Delta t}),
  \] hence \(\lambda^{\mathrm{opt}}(g)\) becomes asymptotically constant. Setting \(\beta=1\) and \(m=0\) in the leading-order term yields \[
    v_2 \to \frac{\gamma-1+\sqrt{1+2\gamma}}{2}, \qquad \lambda^\ast = 1-\frac{\gamma}{2(1+v_2)}=\frac{1+\sqrt{1+2\gamma}}{1+\gamma+\sqrt{1+2\gamma}},
  \] proving \eqref{eq:lambda_star_limit}.
\end{proof}